\newcommand\abs[1]{\left|#1\right|}
\newtheorem{lemma}{Lemma}
\newtheorem{theorem}{Theorem}
\newcommand{\parcoef}[2]{\phi_{#1}^{(#2)}}
\newcommand{\parcor}[1]{\beta_{#1}}
\newcommand{\zzero}{z_{0}}
\newcommand{\zzeroinv}{\zzero^{-1}}
\newcommand{\Pn}[1][n]{P_{#1}}
\newcommand{\Qn}[1][n-m]{Q_{#1}}
\newcommand{\Phin}[1][n]{\Phi_{#1}}
\newcommand{\pmbeta}{\gamma} % "plus/minus beta"
\newcommand{\zbar}{\bar{z}}
\newcommand{\rr}{r}
\title{
  Partial autocorrelation parameterisation of models with unit
  roots on the unit circle
}
\author{Jamie Halliday \\ University of Manchester
        \and
        Georgi N. Boshnakov \\ University of Manchester
}
\begin{document}

\maketitle

\section{Introduction}
\label{S:saruma-intro}

%\input{sarima-intro}

% Nonstationary ARMA models are frequently used for modelling time series.

Let $\{Y_{t}\}$ be a time series whose evolution can be described by the equation
\begin{equation}
  \label{eq:ARUMA}
  U(B) \phi(B) Y_t = \theta(B) \varepsilon_t
  .
\end{equation}
\citet{TiaoTsay1983} refer to this model as a nonstationary ARMA model.  \citet{HuangAnh1990}
call this model autoregressive unit root moving average (ARUMA), see also
\citet{WoodwardGrayElliott2017}.  Here $\{\varepsilon_{t}\}$ is white noise, $B$ is the
backward shift operator and all roots of the polynomials $\phi(z)$ and $\theta(z)$ are
outside the unit circle. The nonstationary part is specified by the polynomial
$U(z) = 1 - U_1 z - U_2 z^2 - \dots - U_d z^d$ whose all roots have moduli~1 (i.e., lie on
the unit circle). Traditionally the polynomial $U(z)$ does not have coefficients to be
estimated.  This is the case, for example, for the familiar ARIMA and seasonal ARIMA (SARIMA)
models obtained when $U(z) = (1-B)^{d}$ and $U(z) = (1-B)^{d}(1-B^s)^{d_{s}}$, respectively.

\citet{TiaoTsay1983} and \citet{TsayTiao1984} study (iterative) ordinary least squares
procedures for estimation of such models and, in particular, show how the unit roots can be
estimated consistently. 

For time series data it is typical to consider whether seasonal trends appear. This behaviour
is easily captured by the existing models by allowing further polynomials to appear in the
model with the appropriate power transformation of $B$ to account for the
seasonality. Standard example is the \textit{SARIMA} class of models, mentioned above. The
operator $(1-B^{s})^{d_{s}}$ however is sometimes too crude and may be inpractical when the
number of seasons, $s$, is large or in the case of multiple seasons.  A more flexible class
of models is obtained by replacing it with a operator containg only some harmonics of $1/s$.
With a seasonal extension, we refer to this class of models as \textit{SARUMA}. Here is a
symbolic representation:
% Consider a univariate time series $Y_t$ following the SARUMA model given below.
\begin{equation}
  \label{eq:SARUMA}
  U_s(B^s) U(B) \phi_s(B^s) \phi(B) Y_t = \theta_s(B^s) \theta(B) \varepsilon_t,
\end{equation}
where $U_s(z)$ is a seasonal polynomial of degree $d_s$ where all roots are unit, $\phi_s(z)$
is a seasonal autoregressive polynomial of degree $p_s$, $\theta_s(z)$ is a seasonal moving
average polynomial of degree $q_s$ such that all roots of $\phi_s(z)$ and $\theta_s(z)$ lie
outside the unit circle. The remaining terms are as in Equation~\ref{eq:ARUMA}.
%%%% \begin{itemize}
%%%% \item $\{ \varepsilon_t \}$ is a white noise error term,
%%%% \item $B$ is the backshift operator,
%%%% \item $U(z)$ is a polynomial of degree $d$ where all roots are unit, 
%%%% \item $U_s(z)$ is a seasonal polynomial of degree $d_s$ where all roots are unit,
%%%% \item $\phi(z)$ is an autoregressive polynomial of degree $p$ where all roots lie outside the
%%%%   unit circle,
%%%% \item $\phi_s(z)$ is a seasonal autoregressive polynomial of degree $p_s$ where all roots lie
%%%%   outside the unit circle,
%%%% \item $\theta(z)$ is a moving average polynomial of degree $q$ where all roots lie outside
%%%%   the unit circle to ensure the model is invertible,
%%%% \item $\theta_s(z)$ is a seasonal moving average polynomial of degree $q_s$ where all roots
%%%%   lie outside the unit circle to ensure the model is invertible.
%%%% \end{itemize}
We also require that there are no common roots between the $\phi_s(z^s)\phi(z)$ and
$\theta_s(z^s)\theta(z)$ components of the model. In practice, it is sometimes useful to
factor $U(z)$ and $U_{s}(z)$ into further factors in order to obtain more meaningful and/or
manageable models.

In principle the SARUMA model can be written in the form of model~\eqref{eq:ARUMA} by
expanding $U_{s}(B^{s})U(B)$ and estimate it using the OLS method of \citet{TsayTiao1984} but
this looses any parsimony that might be achievable otherwise. 

Parameterisations of stationary models through partial autocorrelations are widely used in
the stationary case but for unit root models partial autocorrelations are not
defined. Nevertheless, we show that partial autocorrelations equal to $\pm1$ naturally
describe multiplicative ARUMA models and neatly fit with the standard practice of fitting
ARIMA models. We continue to call them partial autocorrelations though they do not have the
usual statistical interpretation and are purely a parameterisation of the polynomial on the
left-hand side of Equation~\eqref{eq:SARUMA}.

The transformation from partial autocorrelations to polynomial coefficients is unique, so
residuals and sums of squares are easily available and estimation is posible.

In this paper we obtain the algebraic properties of the partial autocorrelations in the
context of unit roots. The main result is that if a partial autocorrelation sequence contains
some values equal to $1$ or $-1$, then it can be split at these values into sequences
each of which represents the partial autocorrelations of a factor of the overall polynomial
on the left-hand side of the model. 
A separate paper will discuss the details of the estimation procedure
and its properties. An implementation is provided by \citet[][function
\texttt{sarima}]{Rsarima}.

%%%% \section{From ts.tex: Box-Jenkins}
%%%% \label{S:ts-BoxJenkins}
%%%% 
%%%% \input{ts_BJ}
%%%% 
%%%% %%  \subsection{From ts.tex: State space}
%%%% %%  \label{S:ts-stateSpace}
%%%% %% 
%%%% %%  \input{ts_Kalman}
%%%% %% 
%%%% %%  \newpage{}

%----------------------------------------------------------------------------------------
%	  INTRODUCTION
%----------------------------------------------------------------------------------------

\section{Levinson-Durbin algorithm and its inverse}

The use of partial autocorrelations as a parameterisation of autoregressive (AR) stationary
models and stable filters is well established.  For stationary AR models there is a
one-to-one map between the autoregressive parameters and the partial autocorrelations. The
partial autocorrelations have a clear statistical meaning in this case. The one-to-one map
allows to think of the partial autocorrelations also as an alternative way to parameterise
the coefficients of the associated autoregressive polynomial.

For a stationary process $\{X_{t}\}$, let $\parcoef{1}{n},\dots,\parcoef{n}{n}$ be the
partial prediction coefficients for the best linear predictor,
$\parcoef{1}{n}Y_{t} + \dots + \parcoef{n}{n}Y_{t-n+1}$, of $Y_{t+1}$ based on the latest
available $n$ observations.  Let $\parcor{1}, \parcor{2},\ldots$ be the partial
autocorrelations. It is convenient to define $\parcor{0} = 1$.  Consider also the polynomial
\begin{align*}
  1 - \parcoef{1}{n}z - \dots - \parcoef{n}{n}z^{n}
  .
\end{align*}
The statistical meaning of the partial autocorrelations and partial prediction coefficients
is not really needed for the exposition below but gives context.

The Levinson-Durbin recursions \citep{BrockwellDavis1991} can be used to compute the partial
prediction coefficients from the partial autocorrelations, as follows:
\begin{align}
\label{eq:pacf2ar1}
     \parcoef{n}{n} &= \parcor{n} 
  \\ 
\label{eq:pacf2ar2}  
  \parcoef{k}{n} &= \parcoef{k}{n - 1} - \parcor{n} \parcoef{n - k}{n - 1}
                      \qquad{}
                      \text{$k = 1, \dots, n - 1$}
  \\  & \text{(for $n = 1, 2, \ldots $)}  \nonumber
  .                     
\end{align}
It is evident that the transformation from partial autocorrelations to partial coefficients is uniquely
defined without the need to put restrictions on
$\parcor{1}, \dots, \parcor{n}$. Note that, strictly speaking, the Levinson-Durbin algorithm
contains an additional step at each $n$ for computing the partial autocorrelation from
autocorrelations, which we don't need since we start with partial autocorrelations. 

The recursions can be arranged in reverse order to compute the partial autocorrelations 
from the partial coefficients $\parcoef{1}{m}, \dots \parcoef{m}{m}$:
\begin{align}
     \label{eq:ar2pacf1}
     \parcor{n}  &= \parcoef{n}{n}
  \\
     \label{eq:ar2pacf2}
     \parcoef{k}{n - 1}     &= (\parcoef{k}{n} + \parcor{n} \parcoef{n - k}{n} )
                              / (1 - \parcor{n}^{2}) 
                      \qquad{}
                      \text{$k = 1, \dots, n - 1$}
  \\  & \text{(for $n = m, m - 1, \ldots, 1$)} \nonumber
  .                     
\end{align}
At the end we have $\parcor{1},\dots,\parcor{n}$. Detailed discussion of several variants of
the Levinson-Durbin algorithm is given by \citet{porat2008digital}.

Of course, the inverse recursion will work only if $|\parcor{k}| \neq 1$ for
$k = 1, \dots, n$.  In that case the relationship between the two sets of coefficients is
one-to-one. The case $\abs{\parcor{k}} > 1$ is not of interest to us here. Our aim is to show
that allowing some of the partial autocorrelations to be equal to one provides a very natural
parameterisation for models with arbitrary unit roots, including seasonal ARIMA models.
Since partial autocorrelations uniquely determine the filter coefficients, this means that
residuals can be computed and so a non-linear least squares estimation of the unit root
filter can be performed.

Some further insight can be obtained by noticing that the equations are paired for $k$ and
$n-k$:
\begin{align*}
   \parcoef{k}{n} &= \parcoef{k}{n - 1} - \parcor{n} \parcoef{n - k}{n - 1}  \\
   \parcoef{n-k}{n} &= \parcoef{n-k}{n - 1} - \parcor{n} \parcoef{k}{n - 1}
                      \qquad{}
                      \text{$k = 1, \dots, [n/2]$}
 .                     
\end{align*}
If $n$ is even and $k = n/2 = n-k$ the two equations can be reduced to
\begin{equation*}
  \parcoef{n/2}{n}
    = \parcoef{n/2}{n - 1} - \parcor{n} \parcoef{n/2}{n - 1}
    = \parcoef{n/2}{n - 1}(1 - \parcor{n})
    .
\end{equation*}
In particular, if $\parcor{n} = 1$ then $\parcoef{n/2}{n} = 0$ and if $\parcor{n} = -1$ 
then $\parcoef{n/2}{n} = 2\parcoef{n/2}{n - 1}$. It is also obvious that when $k \neq n/2$ 
that $\parcoef{k}{n} = -\parcoef{n-k}{n}$ when $\parcor{n} = 1$ and 
$\parcoef{k}{n} = \parcoef{n-k}{n}$ when $\parcor{n} = -1$. For example, when $n=2$, 
the above gives $\parcoef{1}{2} = 0$ if $\parcor{2} = 1$ and the polynomial must be 
$1 - z^{2}$. When $\parcor{2} = -1$ then $\parcoef{1}{2} = 2\parcoef{1}{1} = 2\parcor{1}$ 
and the polynomial is $1 - 2\parcor{1}z + z^{2}$, which generates a pair of complex roots.

In what follows we show how polynomials can be separated after the occurrence of
a partial autocorrelation value of unit magnitude and show that sequence of partial 
autocorrelations ending with a unit value produces a polynomial that contains only
roots on the unit circle. This methodology can be used to define each polynomial in 
Equation~\eqref{eq:SARUMA}.

%-------------------------------------------------------------------------------
%	 PARAMETERISATION
%-------------------------------------------------------------------------------

\section{Parameterisation using partial autocorrelations}

Let $\parcor{k}$, $k = 1, 2, \ldots$, be a sequence of partial autocorrellations.
Define polynomials $\Pn(z)$ by
\begin{equation} \label{eq:Pn}
  \Pn(z) = \sum_{k = 1}^n \parcoef{k}{n} z^k
     ,\quad \text{for $n = 1, 2, \ldots$,}
     \qquad \Pn[0](z) = 0
  ,
\end{equation}
where $\parcoef{k}{n}$ are the partial coefficients obtained from
$\parcor{1},\dots,\parcor{n}$, using
Equations~\eqref{eq:pacf2ar1}--\eqref{eq:pacf2ar2}.
Our main interest is in the positions of the zeroes of  the polynomials
\begin{equation*}
  \Phin(z) = 1 - \Pn(z)
           = 1 - \sum_{k = 1}^{n} \parcoef{k}{n} z^k
     ,\quad \text{for $n = 0, 1, 2, \ldots$}
  .
\end{equation*}

It is well known that if the coefficients of the polynomial $\Phin{(z)}$ are obtained from
partial autocorrelations $\parcor{1}, \dots, \parcor{n}$, such that $\abs{\parcor{i}} < 1$
for $i=1,\dots,n$, then all zeroes of the polynomial $\Phin{(z)}$ are outside the unit circle
(i.e., have moduli greater than~1). In particular, their product has modulus larger than~1.

What happens if $\abs{\parcor{i}} < 1$ for $i=1,\dots,n-1$, but $\parcor{n} = \pm1$? We
formulate the result as a lemma. It is hardly new but not easily available.
\begin{lemma} \label{le:unitRoots}
  If $\abs{\parcor{i}} < 1$ for $i=1,\dots,n-1$, $\parcor{n} = \pm1$, then all zeroes,
  $z_1, \dots, z_{n}$, of the polynomial $\Phin{(z)} = 1 - P_{n}(z)$ are on the unit circle
  (i.e., $\abs{z_i} = 1$ for $i = 1, \dots, n$).
\end{lemma}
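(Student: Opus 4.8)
The plan is to reduce the statement to two facts about the lower‑order polynomial $\Phin[n-1]$: that the Levinson--Durbin step writes $\Phin$ in terms of $\Phin[n-1]$ and its reciprocal polynomial, and that $\Phin[n-1]$ has no zeroes in the closed unit disc. First I would record the polynomial form of the recursion~\eqref{eq:pacf2ar1}--\eqref{eq:pacf2ar2}: substituting it into $\Pn(z) = \sum_{k=1}^{n}\parcoef{k}{n}z^{k}$ and reindexing the sum involving the terms $\parcoef{n-k}{n-1}$ gives
\[
  \Phin(z) \;=\; \Phin[n-1](z) \;-\; \parcor{n}\, z^{n}\, \Phin[n-1](1/z) ,
\]
an identity valid for every sequence $\parcor{1},\dots,\parcor{n}$. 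Note that $z^{n}\Phin[n-1](1/z)$ is a genuine polynomial of degree exactly $n$: it is $z$ times the degree-$(n-1)$ reciprocal polynomial $z^{n-1}\Phin[n-1](1/z)$, whose constant term is $-\parcor{n-1}$ and whose leading coefficient is $1$.

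Next I would show that $\Phin$ has no zero in the open unit disc, using only $\abs{\parcor{n}} \le 1$. Since $\abs{\parcor{i}} < 1$ for $i = 1,\dots,n-1$, the result quoted just before the lemma gives that all zeroes of $\Phin[n-1]$ lie strictly outside the unit circle, so $\Phin[n-1](z) \neq 0$ for $\abs{z} \le 1$. Then the rational function
\[
  g(z) \;=\; \frac{z^{n}\,\Phin[n-1](1/z)}{\Phin[n-1](z)}
\]
is holomorphic on a neighbourhood of the closed unit disc, it is non-constant (its numerator has degree $n$ while its denominator has degree at most $n-1$), and, because $\Phin[n-1]$ has real coefficients, it satisfies $\abs{g(z)} = 1$ on $\abs{z} = 1$. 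By the maximum modulus principle, $\abs{g(z)} < 1$ for $\abs{z} < 1$, so for such $z$
\[
  \abs{\parcor{n}\, z^{n}\,\Phin[n-1](1/z)} \;=\; \abs{\parcor{n}}\,\abs{g(z)}\,\abs{\Phin[n-1](z)} \;<\; \abs{\Phin[n-1](z)} ,
\]
and the identity of the first step together with the triangle inequality forces $\Phin(z) \neq 0$.

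Finally I would bring in the hypothesis $\parcor{n} = \pm1$. Replacing $z$ by $1/z$ in the identity of the first step, multiplying through by $z^{n}$, and using $\parcor{n}^{2} = 1$ yields $z^{n}\Phin(1/z) = -\parcor{n}\Phin(z)$, i.e.\ $\Phin$ is self-inversive. Since $\Phin(0) = 1 \neq 0$, the polynomial $\Phin$ has no zero at the origin and has degree exactly $n$; the self-inversive identity then says that the multiset of its zeroes $z_{1},\dots,z_{n}$ is invariant under $z \mapsto 1/z$. If some $\abs{z_{j}} > 1$, then $1/z_{j}$ would be a zero of modulus $< 1$, contradicting the previous paragraph; hence $\abs{z_{j}} = 1$ for every $j$, which is the claim.

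I expect the main obstacle to be the maximum modulus argument in the middle step — extracting a \emph{strict} interior inequality from the equality $\abs{g} = 1$ on the boundary — since that is precisely what converts ``$\Phin[n-1]$ has no zeroes in the closed disc'' into ``$\Phin$ has no zeroes in the open disc''. The index bookkeeping in the first step and the boundary identity $\abs{g} = 1$ are routine, and once the interior estimate is in hand the self-inversive symmetry pins all the zeroes to the unit circle with no further work.
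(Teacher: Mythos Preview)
Your argument is correct, and it follows a genuinely different route from the paper's. The paper argues as follows: since the leading coefficient of $\Phin$ is $-\parcor{n}=\mp1$ and its constant term is $1$, Vieta's formulas give $\lvert z_{1}\cdots z_{n}\rvert=1$; perturbing $\parcor{n}$ to values of modulus strictly less than $1$ and invoking the stability of $\Phin[n-1]$-type polynomials, continuity of roots forces each $\lvert z_{j}\rvert\ge 1$ in the limit; combining the two gives $\lvert z_{j}\rvert=1$ for all $j$. Your proof instead derives the polynomial recursion $\Phin(z)=\Phin[n-1](z)-\parcor{n}z^{n}\Phin[n-1](1/z)$ directly, uses a maximum-modulus (Schur--Cohn style) estimate on the Blaschke-type quotient $g$ to exclude zeroes from the open disc, and then exploits the self-inversive identity $z^{n}\Phin(1/z)=-\parcor{n}\Phin(z)$ to rule out zeroes outside the circle as well. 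The paper's approach is shorter and avoids complex analysis, but leans on the continuity-of-roots fact; your approach is more self-contained and, pleasantly, the two identities you establish are exactly the recursion~\eqref{eq:PD1n} and (a special case of) relation~\eqref{eq:PD4} that the paper develops and uses later, so your proof anticipates machinery that becomes central in the sequel.
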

One way to show this is to notice that in that case the Vietta formulas imply that the
product of the zeroes of $\Phin(z)$ is $\pm{1}$.  Then let $\parcor{n}^{(i)} \to \parcor{n}$,
$\abs{\parcor{n}^{(i)}} < 1$ for $i=1,2,\ldots$ and consider the sequence of polynomials
$\Phi_{n}^{(i)}(z)$, $i=1,2,\ldots$.  Since the zeroes of polynomials are continuous
functions of their coefficients, and hence the partial autocorrelations, the zeroes of
$\Phi_{n}^{(i)}(z)$ converge to the zeroes of $\Phin(z)$. But all zeroes of
$\Phi_{n}^{(i)}(z)$ are strictly outside the unit circle, so their limits (the zeroes of
$\Phin(z)$) are outside or on the unit circle. This means that their product can be equal
to~1 only if all of them have modulus~1.

The following relation between the polynomials $\Pn(z)$ can be
obtained from the Levinson-Durbin recursions.  Let $n \ge 2$. For
general $z$, multiply Equation~\eqref{eq:pacf2ar2} by $z^k$ for
$k = 1, \dots, n-1$, and sum to obtain
\begin{equation*}
  \sum_{k=1}^{n-1} \parcoef{k}{n} z^k
  = 
  \sum_{k=1}^{n-1} \phi_k^{(n-1)} z^k  - \parcor{n} \sum_{k=1}^{n-1} \phi_{n-k}^{(n-1)} z^k
  .
\end{equation*}
Using the definition of the polynomial $\Pn(z)$ and $\parcor{n} = \parcoef{n}{n}$,
this can be written as
\begin{equation*}
  \Pn(z) - \parcor{n} z^n = P_{n-1}(z) - \parcor{n} z^n P_{n-1}(z^{-1})
  ,
\end{equation*}
which after rearranging becomes
\begin{equation}
\label{eq:PD1n}
(1 - \Pn(z)) = (1 - P_{n-1}(z)) - \parcor{n} z^n \left( 1 - P_{n-1}(z^{-1}) \right)
.  
\end{equation}
The above equation was derived for $n \ge 2$ but it holds also, trivially, for $n = 1$.
Note that the coefficients of the polynomial  $z^n \left( 1 - P_{n-1}(z^{-1})\right)$ are
those of $(1 - \Pn(z))$ in reverse order. 

In general, the polynomials $1 - P_{n}(z)$, $n = 1, 2,\ldots$, do not have common zeroes.  A
remarkable exception, particularly important for unit root models, is given by the following
lemma. It shows that if $\zzero$ is such that it and $\zzeroinv$ are both zeroes of the
polynomial $1 - P_{m}(z)$, then they are also zeroes of the polynomials $1 - P_{n}(z)$ for
all $n\ge m$.
\begin{lemma}
  Let $\zzero$ be such that $1 - \Pn[m](\zzero) = 0$ and $1 - \Pn[m](\zzeroinv) = 0$ for
  some $m \in \mathbb{Z}^+$.  Then  $1 - \Pn(\zzero) = 0$ and
  $1 - \Pn(\zzeroinv) = 0$ for any $n \geq m$.
\end{lemma}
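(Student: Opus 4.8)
The plan is to induct on $n$, starting at $n = m$, using the recursion~\eqref{eq:PD1n}, which expresses $1 - \Pn(z)$ in terms of $1 - \Pn[n-1](z)$ and $1 - \Pn[n-1](z^{-1})$. The base case $n = m$ is precisely the hypothesis. Before starting I would observe that $\zzero \neq 0$: since $\Pn[m](0) = 0$ by the definition~\eqref{eq:Pn}, we have $1 - \Pn[m](0) = 1 \neq 0$, so $\zzeroinv$ is well defined and can legitimately appear in the statement.

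For the inductive step I would assume $1 - \Pn[n-1](\zzero) = 0$ and $1 - \Pn[n-1](\zzeroinv) = 0$ and substitute $z = \zzero$ into~\eqref{eq:PD1n}:
\begin{equation*}
  1 - \Pn(\zzero)
    = \bigl(1 - \Pn[n-1](\zzero)\bigr) - \parcor{n}\,\zzero^{n}\bigl(1 - \Pn[n-1](\zzeroinv)\bigr)
    = 0,
\end{equation*}
both bracketed factors vanishing by the inductive hypothesis. Then substitute $z = \zzeroinv$ into~\eqref{eq:PD1n}, noting that the reflected argument $z^{-1}$ becomes $\zzero$:
\begin{equation*}
  1 - \Pn(\zzeroinv)
    = \bigl(1 - \Pn[n-1](\zzeroinv)\bigr) - \parcor{n}\,\zzero^{-n}\bigl(1 - \Pn[n-1](\zzero)\bigr)
    = 0,
\end{equation*}
again because both bracketed factors vanish. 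This closes the induction and gives $1 - \Pn(\zzero) = 0$ and $1 - \Pn(\zzeroinv) = 0$ for every $n \geq m$.

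I do not anticipate any real obstacle; once it is set up as an induction the argument is a two-line evaluation of~\eqref{eq:PD1n}. The one point worth flagging is structural rather than technical: the recursion~\eqref{eq:PD1n} couples the value of the polynomial at $z$ with its reflected value at $z^{-1}$, so neither a zero at $\zzero$ nor one at $\zzeroinv$ propagates on its own --- they must be carried forward together as a reciprocal pair, which is exactly why the lemma requires both conditions in its hypothesis. Combined with Lemma~\ref{le:unitRoots}, this is the mechanism showing that a reciprocal pair of unit-circle zeroes, once present at stage $m$, persists in $1 - \Pn(z)$ for all $n \ge m$, underpinning the splitting result advertised in the introduction.
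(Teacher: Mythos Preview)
Your proof is correct and follows essentially the same approach as the paper: both argue by induction on $n$ using the recursion~\eqref{eq:PD1n}, evaluating it at $\zzero$ and $\zzeroinv$ so that both terms on the right vanish. Your version is slightly more explicit (you write out both substitutions and note $\zzero \neq 0$), but the underlying argument is identical.
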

\begin{proof}

Setting $n = m + 1$ in Equation~\eqref{eq:PD1n} gives
\begin{equation}
%\label{eq:PD1}
  (1 - \Pn[m+1](z))
  = (1 - \Pn[m](z)) - \parcor{m+1} z^{m+1} \left( 1 - \Pn[m](z^{-1}) \right).  
\end{equation}
If $z = \zzero$ or $\zzeroinv$, then both terms on the right-hand side of the last equation
are zero, by the assumptions of the lemma.  Hence, the left-hand side is also zero, i.e.
$1 - P_{m+1}(z_0) = 0$ and $1 - P_{m+1}(z_0^{-1}) = 0$. So, the claim of the lemma holds for
$n = m + 1$.  But Equation~\eqref{eq:PD1n} holds also for $n > m + 1$, so the proof can be
completed by induction.
\end{proof}

The following corollary concerning roots on the unit circle is of primary interest for our
purposes. Indeed, complex roots of polynomials with real coefficients come in complex
conjugate pairs. Moreover, if $\abs{z_{0}} = 1$ then $z_{0}^{-1} = \zbar_{0}$. So, in this
case $1 - P_{m}(z_0) = 0$ implies  $1 - P_{m}(z_0^{-1}) = 0$ and we have:
% It holds from the complex conjugate root theorem \citep{BusamFreitag2009}, 
% since the polynomials considered in this paper have real coefficients.

\begin{lemma} \label{le:z}
  If $\abs{z_0} = 1$ and $1 - P_{m}(z_0) = 0$
  then $1 - \Pn(\zzero) = 0$ and $1 - \Pn(\zzeroinv) = 0$ for any $n \geq m$.
\end{lemma}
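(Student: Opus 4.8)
The plan is to obtain this lemma as an immediate consequence of the preceding one; the only point needing attention is to verify that its hypotheses are met. Concretely, I would show that $\abs{\zzero} = 1$ together with $1 - \Pn[m](\zzero) = 0$ forces $1 - \Pn[m](\zzeroinv) = 0$, after which the preceding lemma applies verbatim.

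First I would observe that the partial autocorrelations $\parcor{1}, \dots, \parcor{m}$ are real, so the Levinson-Durbin recursions~\eqref{eq:pacf2ar1}--\eqref{eq:pacf2ar2} produce real partial coefficients $\parcoef{k}{m}$; hence every coefficient of $\Pn[m](z)$ is real and $\overline{1 - \Pn[m](\zzero)} = 1 - \Pn[m](\overline{\zzero})$. Therefore $1 - \Pn[m](\zzero) = 0$ implies $1 - \Pn[m](\overline{\zzero}) = 0$. Then I would invoke $\abs{\zzero} = 1$, which gives $\overline{\zzero} = \zzeroinv$, so that $1 - \Pn[m](\zzeroinv) = 0$ as well.

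At this stage both $\zzero$ and $\zzeroinv$ are zeroes of $1 - \Pn[m](z)$, which is exactly the hypothesis of the preceding lemma. Applying it yields $1 - \Pn(\zzero) = 0$ and $1 - \Pn(\zzeroinv) = 0$ for every $n \geq m$, which is the claim.

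I do not anticipate a genuine obstacle: the statement is a corollary, and the single (routine) ingredient is the reality of the coefficients, which is transparent from the recursions. It is perhaps worth remarking that the degenerate cases $\zzero = \pm 1$ need no separate treatment, since then $\zzeroinv = \zzero = \overline{\zzero}$ and the conjugation step is vacuous.
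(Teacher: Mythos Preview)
Your proposal is correct and follows essentially the same route as the paper: the paper also derives Lemma~\ref{le:z} as an immediate corollary of the preceding lemma by noting that the coefficients of $1 - P_m(z)$ are real, that complex roots therefore come in conjugate pairs, and that $\abs{z_0}=1$ gives $\bar z_0 = z_0^{-1}$, so $1 - P_m(z_0^{-1}) = 0$ and the preceding lemma applies. Your write-up is slightly more explicit (tracing the reality of the coefficients back to the recursions and remarking on the $z_0=\pm1$ case), but the argument is the same.
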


A useful consequence of Lemma~\ref{le:z} is the following result.

\begin{lemma} \label{le:zall} If all roots, $z_1, \dots, z_{m}$, of the polynomial
  $1 - P_{m}(z)$ are on the unit circle (i.e., $\abs{z_i} = 1$ for $i = 1, \dots, m$), then
  $1 - P_{m}(z)$ is a factor of $1 - \Pn(z)$ for any $n \geq m$.
\end{lemma}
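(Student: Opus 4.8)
The plan is to prove the divisibility statement $\bigl(1 - \Pn[m](z)\bigr) \mid \bigl(1 - \Pn(z)\bigr)$ by induction on $n \ge m$, using the recursion~\eqref{eq:PD1n} to pass from $n - 1$ to $n$. The base case $n = m$ is trivial. For the inductive step I would write~\eqref{eq:PD1n} in the form
\begin{equation*}
  1 - \Pn(z) = \bigl(1 - \Pn[n-1](z)\bigr) - \parcor{n}\, z^{n}\bigl(1 - \Pn[n-1](z^{-1})\bigr),
\end{equation*}
note that $1 - \Pn[m](z)$ divides the first term on the right by the induction hypothesis, and reduce the whole step to showing that $1 - \Pn[m](z)$ also divides the reversed polynomial $z^{n}\bigl(1 - \Pn[n-1](z^{-1})\bigr)$.

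The fact that drives the argument is that $1 - \Pn[m](z)$ is, up to sign, self-reciprocal. First I would observe that since all its roots lie on the unit circle and it has real coefficients, the product of its roots has modulus~$1$; by Vieta's formulas this forces its leading coefficient $-\parcor{m}$ to satisfy $\parcor{m} = \pm 1$, so $1 - \Pn[m](z)$ has degree exactly $m$. Next, because the multiset of roots of a real polynomial is invariant under $z \mapsto \zbar$, which on the unit circle coincides with $z \mapsto z^{-1}$, the polynomial $1 - \Pn[m](z)$ and its reciprocal $z^{m}\bigl(1 - \Pn[m](z^{-1})\bigr)$ share the same roots with the same multiplicities; comparing leading coefficients then yields the identity
\begin{equation*}
  z^{m}\bigl(1 - \Pn[m](z^{-1})\bigr) = -\parcor{m}\bigl(1 - \Pn[m](z)\bigr).
\end{equation*}

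From this identity I would extract the key mechanism: if $1 - \Pn[m](z)$ divides a polynomial $g$, with $g(z) = \bigl(1 - \Pn[m](z)\bigr) Q(z)$ and $\deg Q = \deg g - m$, then it also divides $z^{\deg g} g(z^{-1})$, since
\begin{equation*}
  z^{\deg g} g(z^{-1}) = z^{\deg g - m}\bigl[z^{m}\bigl(1 - \Pn[m](z^{-1})\bigr)\bigr] Q(z^{-1}) = -\parcor{m}\bigl(1 - \Pn[m](z)\bigr)\, z^{\deg g - m} Q(z^{-1}),
\end{equation*}
and $z^{\deg g - m} Q(z^{-1})$ is again a polynomial. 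Applying this with $g = 1 - \Pn[n-1](z)$ (divisible by $1 - \Pn[m](z)$ by the induction hypothesis) and multiplying by the leftover power of $z$ shows that $1 - \Pn[m](z)$ divides $z^{n}\bigl(1 - \Pn[n-1](z^{-1})\bigr)$, which closes the induction via the displayed form of~\eqref{eq:PD1n}.

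The step I expect to be the main obstacle is the bookkeeping around root multiplicities. Lemma~\ref{le:z} only tells us that each unit-circle root of $1 - \Pn[m]$ is a root of $1 - \Pn$, not that it occurs with the same or a larger multiplicity, and repeated unit-circle roots genuinely appear (for example $1 - \Pn[2](z) = (1 - z)^{2}$ for $\parcor{1} = 1$, $\parcor{2} = -1$). The self-reciprocity identity is exactly what lets us sidestep this: it carries the entire factor $1 - \Pn[m](z)$ through the recursion in one piece, so no separate multiplicity argument is needed. A minor point to keep in mind is that some $\parcor{k}$ with $k < n$ may vanish, making $1 - \Pn[n-1](z)$ have degree below $n - 1$; this only lowers $\deg Q$ and does not affect the argument.
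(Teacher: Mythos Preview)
Your argument is correct and takes a different route from the paper's. The paper's proof simply invokes Lemma~\ref{le:z}: since $|z_i| = 1$ and the coefficients are real, each root $z_i$ of $1 - P_m(z)$ is also a root of $1 - P_n(z)$ for $n \ge m$, and the divisibility conclusion is drawn directly from that. As you rightly flag, this shortcut is silent about multiplicities, which matters because repeated unit roots do occur (your example $\parcor{1}=1$, $\parcor{2}=-1$ giving $(1-z)^2$ is apt). Your approach instead establishes the self-reciprocity identity $z^{m}\bigl(1 - P_m(z^{-1})\bigr) = -\parcor{m}\bigl(1 - P_m(z)\bigr)$ and then carries the entire factor $1 - P_m(z)$ through the recursion~\eqref{eq:PD1n} by induction, so multiplicities come along automatically. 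This is more work up front, but it is precisely the mechanism the paper itself deploys later: your identity is Equation~\eqref{eq:PD4} in the proof of Theorem~\ref{thm:polyDecomp} (there written as $(-1)^{d_+}$, which equals $-\parcor{m}$), so in effect you have anticipated the core step of the main theorem and used it to give a multiplicity-safe proof of the lemma.
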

\begin{proof}
  Since the roots have moduli equal to 1 and $1 - P_{m-1}(z)$ has real
  coefficients, it follows from Lemma~\ref{le:z} that
  $z_{1},\dots,z_{m}$ are roots of $1 - \Pn(z)$ for all $n\ge m$,
  hence the result.
\end{proof}

Lemma~\ref{le:zall} shows that if $\Pn(z)$ is the polynomial generated from the
partial autocorrelation sequence
$\parcor{1},\dots,\parcor{m},\parcor{m+1},\dots,\parcor{n}$, where
$\parcor{m}=\pm 1$ and $|\parcor{m+i}|<1$ for $i=1,\dots,n-m$, then $1 - \Pn(z)
= (1 - \Pn[m](z)(1 - T(z))$, where $T(z)$ is some polynomial. It turns out that
$\parcor{m+1},\dots,\parcor{n}$ are, up to possible sign changes, the partial
autocorrelations generating the polynomial $T(z)$.  Our main result in this
section states the complete result.

\begin{theorem}[Main result] \label{thm:polyDecomp}
  %% 2019-09-13 changing the formulation
  %%
  %% Let $\Pn(z)$ be the polynomial generated from the partial autocorrelations
  %% $\parcor{1},\dots,\parcor{m},\parcor{m+1},\dots,\parcor{n}$, where
  %% $|\parcor{i}|\le 1$, for $i = 1,\dots, m - 1$, $\parcor{m}=\pm 1$, and
  %% $|\parcor{m+i}| \le 1$ for $i = 1, \dots, n - m$.  Let $\pmbeta_{i} =
  %% (-1)^{d_{+}}\parcor{m+i}$, $i = 1, \dots, n - m$, where $d_{+}$ is the number
  %% of roots of $1 - P_{m}(z)$ equal to $+1$.

  Let 
  $\parcor{1},\dots,\parcor{m},\parcor{m+1},\dots$, be partial autocorrelations, such that 
  $|\parcor{i}|\le 1$, for $i = 1,\dots, m - 1$, $\parcor{m}=\pm 1$, and
  $|\parcor{m+i}| \le 1$ for $i \ge 1$.
  Let $\Pn(z)$ be the polynomials defined by Equation~\eqref{eq:Pn}. 
  Let also $\pmbeta_{i} = (-1)^{d_{+}}\parcor{m+i}$, $i \ge 1$, where $d_{+}$ is
  the number of zeroes of $1 - P_{m}(z)$ equal to $+1$.

  Then, for each $n \ge m + 1$, $(1 - \Pn(z)) = (1 - \Pn[m](z))(1 - \Qn(z))$,
  where the polynomial $\Qn(z)$ is generated from the partial autocorrelations
  $\pmbeta_{1},\dots,\pmbeta_{n - m}$.
\end{theorem}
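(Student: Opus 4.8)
The plan is to prove the factorisation by induction on $n\ge m+1$, using the two-term recursion~\eqref{eq:PD1n} applied simultaneously to the sequences $\parcor{1},\parcor{2},\dots$ and $\pmbeta_{1},\pmbeta_{2},\dots$. Since $\parcor{m}=\pm1$ and $\abs{\parcor{i}}\le1$ for $i\le m$, Lemma~\ref{le:unitRoots} (more precisely, the limiting argument given just after it) applies and all $m$ zeroes of $1-\Pn[m](z)$ lie on the unit circle; hence Lemma~\ref{le:zall} already tells us that $1-\Pn[m](z)$ divides $1-\Pn(z)$ for every $n\ge m$. The real content of the theorem is therefore to identify the cofactor, and the key device for doing so is a \emph{reflection identity} for the pivot polynomial $1-\Pn[m](z)$.

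First I would record two facts about $1-\Pn[m](z)$. Because $\parcor{m}=\pm1$, its leading coefficient $-\parcor{m}$ is $\mp1\neq0$, so $\deg\bigl(1-\Pn[m]\bigr)=m$; and since its coefficients are real and all its zeroes lie on the unit circle, the zero multiset is closed under conjugation and hence under $z\mapsto z^{-1}$, so the reciprocal polynomial $z^{m}\bigl(1-\Pn[m](z^{-1})\bigr)$ has exactly the same zeroes as $1-\Pn[m](z)$. Comparing leading coefficients gives $z^{m}\bigl(1-\Pn[m](z^{-1})\bigr)=-\parcor{m}\bigl(1-\Pn[m](z)\bigr)$. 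Second, letting $d_{-}$ be the number of zeroes equal to $-1$, the product of all zeroes equals $(-1)^{d_{-}}$ (each conjugate pair of non-real unit-modulus zeroes contributes $1$), while by Vieta's formulas it also equals $(-1)^{m+1}\parcor{m}$; since $m\equiv d_{+}+d_{-}\pmod 2$, these combine to $-\parcor{m}=(-1)^{d_{+}}$. Thus the reflection identity reads $z^{m}\bigl(1-\Pn[m](z^{-1})\bigr)=(-1)^{d_{+}}\bigl(1-\Pn[m](z)\bigr)$.

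For the base case $n=m+1$, substituting this into~\eqref{eq:PD1n} and factoring out $1-\Pn[m](z)$ gives $1-\Pn[m+1](z)=\bigl(1-\Pn[m](z)\bigr)\bigl(1-(-1)^{d_{+}}\parcor{m+1}z\bigr)=\bigl(1-\Pn[m](z)\bigr)\bigl(1-\Qn[1](z)\bigr)$, since $1-\Qn[1](z)=1-\pmbeta_{1}z$ is precisely the degree-one polynomial generated by the single partial autocorrelation $\pmbeta_{1}$. For the inductive step, assume $1-\Pn(z)=\bigl(1-\Pn[m](z)\bigr)\bigl(1-\Qn(z)\bigr)$ with $\Qn$ generated by $\pmbeta_{1},\dots,\pmbeta_{n-m}$. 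Apply~\eqref{eq:PD1n} at level $n+1$, write $z^{n+1}=z^{m}\cdot z^{(n+1)-m}$, insert the inductive factorisation into $z^{n+1}\bigl(1-\Pn(z^{-1})\bigr)$, and replace $z^{m}\bigl(1-\Pn[m](z^{-1})\bigr)$ by $(-1)^{d_{+}}\bigl(1-\Pn[m](z)\bigr)$ using the reflection identity. Every term is then divisible by $1-\Pn[m](z)$, and the cofactor comes out as $\bigl(1-\Qn(z)\bigr)-(-1)^{d_{+}}\parcor{n+1}z^{(n+1)-m}\bigl(1-\Qn(z^{-1})\bigr)$. Because $\pmbeta_{(n+1)-m}=(-1)^{d_{+}}\parcor{n+1}$, this is exactly what~\eqref{eq:PD1n} produces for the $\pmbeta$-sequence at level $(n+1)-m$, namely $1-\Qn[(n+1)-m](z)$, which closes the induction.

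I expect the bulk of the work to be routine polynomial and index bookkeeping; the delicate point, which I would treat as the crux, is the sign accounting of the previous paragraphs --- establishing the reflection identity $z^{m}\bigl(1-\Pn[m](z^{-1})\bigr)=-\parcor{m}\bigl(1-\Pn[m](z)\bigr)$ and the parity identity $-\parcor{m}=(-1)^{d_{+}}$. These are what force the correction term thrown off by~\eqref{eq:PD1n} for $\{\parcor{i}\}$ to coincide with the term produced by~\eqref{eq:PD1n} for $\{\pmbeta_{i}\}$, and they are the reason the factor $(-1)^{d_{+}}$ appears in the definition of $\pmbeta_{i}$ in the first place.
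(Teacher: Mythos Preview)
Your proof is correct and follows essentially the same strategy as the paper's: establish the reflection identity $z^{m}\bigl(1-\Pn[m](z^{-1})\bigr)=(-1)^{d_{+}}\bigl(1-\Pn[m](z)\bigr)$ (the paper's Equation~\eqref{eq:PD4}, obtained there by explicitly factoring $1-\Pn[m]$ into real linear and quadratic factors rather than via your self-reciprocal plus Vieta argument) and then combine it with the Levinson--Durbin recursion~\eqref{eq:PD1n} in an induction on $n$. The only cosmetic difference is that you induct one step at a time, whereas the paper first telescopes~\eqref{eq:PD1n} into a sum from $m+1$ to $n$ and invokes the inductive hypothesis at all intermediate indices; your single-step version is slightly tidier.
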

\begin{proof}
Changing $n$ to $l$ in Equation~\eqref{eq:PD1n} and summing from $m+1$ to $n$ we
obtain
\begin{align*}
  \sum_{l=m+1}^n (1-P_l(z))
  &= \sum_{l=m+1}^{n} (1-P_{l-1}(z)) - \sum_{l=m+1}^n \parcor{l} z^l \left( 1 - P_{l-1}(z^{-1})\right).
  \\ &= \sum_{l=m}^{n-1} (1-P_{l}(z)) - \sum_{l=m+1}^n \parcor{l} z^l \left( 1 - P_{l-1}(z^{-1})\right).
\end{align*}
After cancelling the common terms in the two sides of the equation and
rearranging, we get
\begin{align}
\label{eq:PD2}
  (1 - \Pn(z))
  &= (1 - P_{m}(z)) - \sum_{l=m+1}^n \parcor{l} z^l \left( 1 - P_{l-1}(z^{-1})\right).
    \nonumber
\end{align}
In particular, for $m = 0$ we have
\begin{align}
  (1 - \Pn(z))
  &= (1 - P_{0}(z)) - \sum_{l=1}^n \parcor{l} z^l \left( 1 - P_{l-1}(z^{-1})\right).
    \nonumber
  \\ &= (1  - \sum_{l=1}^n \parcor{l} z^l \left( 1 - P_{l-1}(z^{-1})\right).
\end{align}
  By Lemma~\ref{le:unitRoots} all roots of the polynomial $1 - P_{m}(z)$ are on the unit 
  circle. Let $d_{+}$ and $d_{-}$ be the number of roots equal to $+1$ and $-1$, 
  respectively. The remaining $2r$ roots are complex conjugate pairs, 
  $\alpha_{i}, \alpha_{i}^{-1}$, $i=1,\dots,r$, where $\alpha_{i}^{-1}$ is the complex 
  conjugate of $\alpha_{i}$ since $|\alpha_{i}|=1$.  Obviously, $m = d_{+} + d_{-} + 2r$. 
  We have
  \begin{align*}
    1 - \Pn[m](z)
    &= \left( 1 - z \right)^{d_+} \left( 1 + z \right)^{d_-} 
           \prod_{i=1}^r \left(1 - \frac{z}{\alpha_i} \right) \left(1 - \alpha_i z
      \right)
    \\ &= \left( 1 - z \right)^{d_+} \left( 1 + z \right)^{d_-} 
           \prod_{i=1}^r \left(1 - (\frac{1}{\alpha_i} + \alpha_i) z + z^{2} \right)
    .
  \end{align*}
From this we get
\begin{align}
  1 - \Pn[m](z^{-1})
  &= \left( 1 - z^{-1} \right)^{d_+} \left( 1 + z^{-1} \right)^{d_-} 
   \prod_{i=1}^r \left(1 - \frac{z^{-1}}{\alpha_i} \right) \left(1 - \alpha_i z^{-1} 
    \right), \nonumber \\
   &= z^{-d_{+}}(z - 1)^{d_{+}}
     z^{-d_{-}}(z + 1)^{d_{-}}
     \prod_{i=1}^r z^{-2} \left(z - \frac{1}{\alpha_i} \right) \left(z - \alpha_i \right)
      \nonumber \\
   &= z^{-d_{+}}(z - 1)^{d_{+}}
     z^{-d_{-}}(z + 1)^{d_{-}}
     z^{-2r} \prod_{i=1}^r  \left(z^{2} - (\frac{1}{\alpha_i} + \alpha_{i} ) z + 1 \right)
      \nonumber \\
   &= z^{-m} (-1)^{d_+} \left( 1 - z \right)^{d_+} \left( 1 + z \right)^{d_-} 
     \prod_{i=1}^r \left(1 - \frac{z}{\alpha_i} \right) \left(1 - \alpha_i z \right)
     \nonumber \\
   \label{eq:PD4}
   &= z^{-m} (-1)^{d_+} \left( 1 - P_{m}(z) \right)
\end{align}
Together with Equation~\eqref{eq:PD2} (with $n = m + 1$) this gives
\begin{align*}
1 - \Pn[m+1](z) &= 1 - \Pn[m](z) -
                  \parcor{m+1} z^{m+1} \left( z^{-m} (-1)^{d_+} \left( 1 - P_{m}(z) \right) \right),
  \\ &= \left( 1 - P_{m}(z) \right) \left( 1 - (-1)^{d_+} \parcor{m+1} z \right).
  \\ &= \left( 1 - P_{m}(z) \right) \left( 1 - \pmbeta_{1} z \right)
  .
\end{align*}
Therefore, when $n = m + 1$, $1 - P_{m}(z)$ is a factor of $1 - P_{m+1}(z)$ and,
moreover, we have the explicit factorisation with $Q_{1}(z) = \pmbeta_{1} z$.

For the general case, let $n > m + 1$ and assume that the claim is true for all $l < n$. 
Concentrate on the case $l = n$ and let $1 - Q_{l-m}(z)$ represent the polynomial remaining 
after division of $1 - P_{l}(z)$ by $1 - P_{m}(z)$ for $l>m$, so that 
\begin{equation}
\label{eq:PD5}
1 - P_{n}(z) = \left( 1 - P_{m}(z) \right) \left( 1 - Q_{n-m}(z) \right)
\end{equation}
Starting from Equation~\eqref{eq:PD1n}, and with the help of 
Equation~\eqref{eq:PD4},
\begin{align}
  1 - P_{n}(z) 
  &= 1 - P_{m}(z) - \sum_{l=m+1}^n \parcor{l} z^l \left( 1 - P_{l-1}(z^{-1}) \right) 
    \nonumber \\
  &= 1 - P_{m}(z) - \sum_{l=m+1}^n \parcor{l} z^l  \left( 1 - P_{m}(z^{-1}) \right) 
    \left( 1 - Q_{l-1-m}(z^{-1}) \right), \nonumber \\
  &= 1 - P_{m}(z) - \sum_{l=m+1}^n \parcor{l} z^{l-m} (-1)^{d_+}   \left( 1 - P_{m}(z) \right) 
    \left( 1 - Q_{l-1-m}(z^{-1}) \right), \nonumber \\
  &= (1 - P_{m}(z))\left(1 - \sum_{l=m+1}^n  (-1)^{d_+} \parcor{l}  z^{l-m} 
    \left( 1 - Q_{l-1-m}(z^{-1}) \right) \right), \nonumber \\
  &= (1 - P_{m}(z))\left(1 - \sum_{l=m+1}^n   \pmbeta_{l-m}  z^{l-m} 
    \left( 1 - Q_{l-1-m}(z^{-1}) \right) \right), \nonumber \\
  &= (1 - P_{m}(z))\left(1 - \sum_{k=1}^{n-m} \pmbeta_{k}  z^{k} 
    \left( 1 - Q_{k-1}(z^{-1}) \right) \right),
  \label{eq:PD6} 
  %%%% &=
  %%%%   \left( 1 - P_{m1}(z) \right) \left(1 - (-1)^{d_+} \sum_{k=1}^{n-m} \parcor{k} z^{k} 
  %%%%   \left( 1 - Q_{k-1}(z^{-1}) \right) \right).
\end{align}

Equation~\eqref{eq:PD6} shows that $1 - P_{m}(z)$ is a factor of $1 - P_{n}(z)$ for some 
$n>m$ and moreover, by comparing it with Equation~\eqref{eq:PD5} we can see that
\begin{equation*}
1 - Q_{n-m}(z) = 1 - \sum_{l=1}^{n-m} \gamma_l z^l \left(1 - Q_{l-1}(z^{-1}) \right),
\end{equation*}
where $\gamma_l = (-1)^{d_+} \parcor{l}$. Notice the similarities between this equation 
and Equation~\eqref{eq:PD2}. $1 - Q_{n-m}(z)$ is of the same form as the original 
polynomial $1-\Pn(z)$ except that the original partial autocorrelation coefficients 
$\parcor{k}$ have been replaced by $\gamma_k$.

By induction, 
% $1 - P_{m}(z)$ is a factor of $1 - P_{n}(z)$ for all $n \geq m$.
the claim of the theorem is proved.
\end{proof}

If there are more partial autocorrelations with modulus~1,
Theorem~\ref{thm:polyDecomp} can be applied recursively to get a factorisation
of the unit root polynomials.

\begin{theorem}
  Let $m_{1} < m_{2} < \dots < m_{r}$, be positive integers such that
  $\abs{\parcor{m_{i}}} = 1$, $i=1,\dots,\rr$.
  Then for each $n \ge m_{r} + 1$
  \begin{align*}
    (1 - \Pn(z))
    &= (1 - \Pn[m_{1}](z))(1 - \Pn[m_{2} - m_{1}](z))
      \cdots
      (1 - \Pn[m_{r} - m_{r-1}](z))
      (1 - \Qn[n-m_{r}](z))
      ,
  \end{align*}
  where the polynomials $(1 - \Pn[m_{i}](z))$ are obtained from the partial
  autocorrelations $\parcor{i}$, $i=m_{i-1}+1,\dots,m_{i}$ with adjusted signs
  as given by Theorem~\ref{thm:polyDecomp} (applied recursively)
  and the polynomial $\Qn(z)$ is generated from the partial autocorrelations
  $\pmbeta_{1},\dots,\pmbeta_{n - m_{r}}$.
\end{theorem}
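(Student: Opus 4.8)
The plan is to prove this by induction on $\rr$, the number of split points, with Theorem~\ref{thm:polyDecomp} serving as the base case and as the single step that gets iterated. Throughout I keep the standing assumption $\abs{\parcor{i}}\le 1$ of this section, and the $m_{i}$ are among the indices at which the value has modulus exactly~$1$. For $\rr = 1$ the claim is literally Theorem~\ref{thm:polyDecomp} with $m = m_{1}$: for $n \ge m_{1}+1$ it gives $1 - \Pn(z) = (1 - \Pn[m_{1}](z))(1 - \Qn[n-m_{1}](z))$, with $1 - \Qn[n-m_{1}](z)$ generated by $\pmbeta_{1},\dots,\pmbeta_{n-m_{1}}$, $\pmbeta_{k} = (-1)^{d_{+}}\parcor{m_{1}+k}$, where $d_{+}$ is the number of $+1$-roots of $1 - \Pn[m_{1}](z)$.

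For the inductive step, assume the statement for $\rr - 1$ split points. Given $m_{1} < \dots < m_{\rr}$, I would first peel off the leftmost block via Theorem~\ref{thm:polyDecomp} at $m = m_{1}$, obtaining
\begin{equation*}
  1 - \Pn(z) = (1 - \Pn[m_{1}](z))(1 - \Qn[n-m_{1}](z)),
\end{equation*}
where $1 - \Qn[n-m_{1}](z)$ is generated by the shifted, sign-adjusted sequence with entries $\gamma_{k} = (-1)^{d_{+}^{(1)}}\parcor{m_{1}+k}$, $k\ge 1$, and $d_{+}^{(1)}$ counts the $+1$-roots of $1 - \Pn[m_{1}](z)$. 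The crucial observation is that multiplying a whole sequence by a fixed sign leaves moduli untouched, so $\abs{\gamma_{m_{i}-m_{1}}} = \abs{\parcor{m_{i}}} = 1$ for $i = 2,\dots,\rr$, while $\abs{\gamma_{k}}\le 1$ for all $k$. Hence the $\gamma$-sequence meets the hypotheses with the $\rr-1$ split points $m_{2}-m_{1} < \dots < m_{\rr}-m_{1}$, and the condition $n - m_{1} \ge (m_{\rr}-m_{1})+1$ is exactly $n \ge m_{\rr}+1$. Applying the induction hypothesis to $1 - \Qn[n-m_{1}](z)$ then yields
\begin{equation*}
  1 - \Qn[n-m_{1}](z) = (1 - \Pn[m_{2}-m_{1}](z))(1 - \Pn[m_{3}-m_{2}](z)) \cdots (1 - \Pn[m_{\rr}-m_{\rr-1}](z))(1 - \Qn[n-m_{\rr}](z)),
\end{equation*}
each inner factor generated from the appropriate block of the $\gamma$-sequence with the further recursive sign adjustments supplied by Theorem~\ref{thm:polyDecomp}, and the residual $1 - \Qn[n-m_{\rr}](z)$ generated from its last $n - m_{\rr}$ entries. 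Substituting this into the previous display produces the asserted $\rr$-fold factorisation; and since the $\gamma$-sequence is the $\parcor{}$-sequence shifted by $m_{1}$ and scaled by $(-1)^{d_{+}^{(1)}}$, the inner sign corrections compose with that scalar to give precisely the signs described in the statement, with the final block indexed by $\pmbeta_{1},\dots,\pmbeta_{n-m_{\rr}}$.

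The logical skeleton is short, so the hard part will be the sign bookkeeping rather than anything structural. Each use of Theorem~\ref{thm:polyDecomp} contributes a factor $(-1)^{d_{+}}$ with $d_{+}$ the number of $+1$-roots of the polynomial being split off, and I would need to check carefully that composing these factors over the $\rr$ nested applications reproduces the recursive sign rule in the statement, and that the index shifts stay aligned (the $i$-th inner factor being built from $\parcor{m_{i-1}+1},\dots,\parcor{m_{i}}$ after the accumulated sign flips, since $\gamma_{m_{i}-m_{1}}$ corresponds to $\parcor{m_{i}}$, and so on at deeper levels). One point worth stating explicitly is that Theorem~\ref{thm:polyDecomp} is invoked here only at indices where the relevant partial autocorrelation has modulus exactly~$1$, so its hypothesis that the value at the split position equals $\pm 1$ is met at every level and the repeated application is legitimate, independently of whether the listed $m_{i}$ happen to exhaust all indices of unit magnitude.
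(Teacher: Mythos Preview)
Your proposal is correct and is precisely the argument the paper has in mind: the paper does not give a separate proof of this theorem but only remarks that Theorem~\ref{thm:polyDecomp} ``can be applied recursively,'' and your induction on $\rr$ is exactly that recursion made explicit. Your attention to the sign bookkeeping and to the fact that the hypotheses of Theorem~\ref{thm:polyDecomp} are met at each level (moduli are preserved under the $(-1)^{d_+}$ flip, indices shift by $m_1$) fills in the details the paper leaves implicit.
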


There are a number of ways to use Theorem~\ref{thm:polyDecomp} in
modelling. The most transparent and useful is given by the following result.

%% TODO: @jamie @georgi - think of something better than 'ARUMA result'
\begin{theorem}[ARUMA result] \label{thm:polyDecomp2}
  Let $n > m$ and $\parcor{1},\dots,\parcor{m},\parcor{m+1},\dots,\parcor{n}$,
  be partial autocorrelations, such that $|\parcor{i}|\le 1$, for $i = 1,\dots,
  m - 1$, $\parcor{m}=\pm 1$, and $|\parcor{m+i}| < 1$ for $i = 1, \dots, n -
  m$.  Let $\Pn(z)$ be the polynomials defined by Equation~\eqref{eq:Pn}.  

  Then $(1 - \Pn(z)) = (1 - \Pn[m](z))(1 - \Qn(z))$, where all zeroes of $(1 -
  \Pn[m](z))$ are on the unit circle and all zeroes of $(1 - \Qn(z))$ are
  outside the unit circle. Further, $1 - \Pn[m](z)$ is generated by
  $\parcor{1},\dots,\parcor{m}$ and $1 - \Qn(z))$ by $\pmbeta_{1},\dots,\pmbeta_{n
    - m}$, where $\pmbeta_{i} = (-1)^{d_{+}}\parcor{m+i}$, $i = 1,\dots, n - m$
  and $d_{+}$ is the number of zeroes of $1 - P_{m}(z)$ equal to $+1$.
\end{theorem}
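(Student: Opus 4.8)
The plan is to read this off from Theorem~\ref{thm:polyDecomp} together with two facts already established: that a partial autocorrelation sequence whose last entry has unit modulus (and whose earlier entries have modulus at most~$1$) generates a polynomial with all zeroes on the unit circle, and that a sequence all of whose entries have modulus strictly below~$1$ generates a polynomial with all zeroes strictly outside the unit circle (the ``well known'' fact recalled just before Lemma~\ref{le:unitRoots}).

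First I would invoke Theorem~\ref{thm:polyDecomp} directly. Its hypotheses ($|\parcor{i}|\le 1$ for $i=1,\dots,m-1$, $\parcor{m}=\pm1$, and $|\parcor{m+i}|\le 1$ for $i\ge 1$) follow from those assumed here, since $|\parcor{m+i}|<1$ in particular gives $|\parcor{m+i}|\le 1$. Hence for every $n\ge m+1$ we have $(1-\Pn(z)) = (1-\Pn[m](z))(1-\Qn(z))$ with $1-\Qn(z)$ generated by $\pmbeta_{1},\dots,\pmbeta_{n-m}$, where $\pmbeta_{i}=(-1)^{d_{+}}\parcor{m+i}$. This already yields the factorisation and the final two clauses of the statement ($1-\Pn[m](z)$ being generated by $\parcor{1},\dots,\parcor{m}$ is simply the definition~\eqref{eq:Pn}), so only the location of the zeroes of the two factors remains.

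For $1-\Pn[m](z)$: when $|\parcor{i}|<1$ for $i=1,\dots,m-1$ this is exactly Lemma~\ref{le:unitRoots}; in general one repeats the argument used to prove that lemma, perturbing $\parcor{1},\dots,\parcor{m}$ to values of modulus strictly below~$1$, using that the zeroes of the resulting perturbed polynomials are then strictly outside the unit circle, passing to the limit to conclude that the zeroes of $1-\Pn[m](z)$ lie on or outside the unit circle, and noting that by Vieta their product has modulus~$1$ (the constant term of $1-\Pn[m](z)$ is~$1$ and its leading coefficient is $-\parcor{m}=\mp1$), which forces every zero onto the unit circle. This is also the first step in the proof of Theorem~\ref{thm:polyDecomp}.

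For $1-\Qn(z)$: here the strict inequality in the hypothesis is what does the work, and it is the only genuine difference from Theorem~\ref{thm:polyDecomp}. Since $|\pmbeta_{i}| = |\parcor{m+i}| < 1$ for $i=1,\dots,n-m$, the polynomial $1-\Qn(z)$ is generated by a partial autocorrelation sequence all of whose entries have modulus strictly below~$1$; by the well known fact recalled before Lemma~\ref{le:unitRoots}, all of its zeroes are strictly outside the unit circle. This completes the proof. The only point needing care is the $\le$ versus $<$ bookkeeping: Lemma~\ref{le:unitRoots} is not literally applicable to $1-\Pn[m](z)$ because the first $m-1$ coefficients are only assumed to have modulus at most~$1$, so one falls back on the limiting/Vieta argument (equivalently, on Theorem~\ref{thm:polyDecomp} applied recursively); whereas strictness of $|\parcor{m+i}|<1$ is used in an essential way to push the zeroes of $1-\Qn(z)$ off the circle.
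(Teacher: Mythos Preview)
Your proposal is correct and follows essentially the same route as the paper's proof: invoke Theorem~\ref{thm:polyDecomp} for the factorisation, Lemma~\ref{le:unitRoots} for the zeroes of $1-\Pn[m](z)$, and the stability fact for the zeroes of $1-\Qn(z)$ via $|\pmbeta_i|=|\parcor{m+i}|<1$. Your treatment of the $\le$ versus $<$ issue for $\parcor{1},\dots,\parcor{m-1}$ is actually more careful than the paper's own proof, which simply cites Lemma~\ref{le:unitRoots} without addressing that its hypothesis is a strict inequality.
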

\begin{proof}
  The factorisation $(1 - \Pn(z)) = (1 - \Pn[m](z))(1 - \Qn(z))$ follows from
  Theorem~\ref{thm:polyDecomp}. By Lemma~\ref{le:unitRoots} all zeroes of
  $(1 - \Pn[m](z))$ are on the unit circle. Further, $(1 - \Qn(z))$ since by
  Theorem~\ref{thm:polyDecomp} they are generated by partial autocorrelations
  $|\pmbeta_{i}| < 1$, $i = 1,\dots, n - m$, which have the same moduli as
  $\parcor{m+1},\dots, \parcor{n}$.
\end{proof}

Theorems~\ref{thm:polyDecomp} and~\ref{thm:polyDecomp2} fit nicely with the
standard practice of applying unit root and/or seasonal unit root filters
(represented here by the polynomial $1 - \Pn[m](z)$) to make a time series
stationary and then fitting a stationary model to the filtered time series.  The
unit root filters are typically chosen in advance.  Our results allow for
estimating the unit root filter. In the simplest case, $\parcor{m}$ (where $m$
is as in Theorem~\ref{thm:polyDecomp2}) is fixed to $\pm1$ and the remaining
partial autocorrelations are estimated using non-linear optimisation in the unit
cube.

% This sentence was in the version for saruma.tex:
%
% This is discussed in more detail in the following sections.

Recall that for the SARUMA model $\Phi(z) = 1 - P_n(z)$. From the results above, we know 
that $\Phi(z)$ decomposes into $\left( 1 - P_m(z) \right) \left( 1 - Q_{n-m}(z) \right)$ 
if all roots of $\left( 1 - P_m(z) \right)$ are on the unit circle. We express 
$\left( 1 - P_m(z) \right)$ as $U(z)$, the unit root polynomial. If no unit partial 
autocorrelation values remain in $\left( 1 - Q_{n-m}(z) \right)$ then this corresponds to 
the stationary $\phi(z)$. Otherwise, the unit root polynomials can be iteratively 
separated and stored as a product in $U(z)$. When $U(z)$ contains all nonstationary 
aspects of the model, the Levinson-Durbin recursion can be used to generate the 
coefficients of $U(z)$ by fixing the final coefficient to $\pm 1$. For example, say that 
$U(z)$ is of degree $d$. The remaining partial autocorrelations can be used to estimate 
the coefficients in $\phi(z)$, starting from $\beta_{d+1}$ and after multiplication with 
$(-1)^{d_+}$.

Firstly assume, without loss of generality, that all seasonal polynomials 
can be dropped ($U_s(z) = \phi_s(z) = \theta_s (z) \equiv 1$). Then the resulting ARUMA 
model can be written
\begin{equation}
\label{eq:nsARMA}
\Phi(B) Y_t  = \theta(B) \varepsilon_t.
\end{equation}
Furthermore, define the polynomial $P_n(z)$ as
\begin{equation*}
  P_n(z) = \sum_{k = 1}^n \phi_k^{(n)} z^k
  ,\quad \text{for $n = 1, 2, \ldots$,}
  \qquad \Pn[0](z) = 0
  ,
\end{equation*}
so that $\Phi(z) = 1 - P_n(z)$ with  $n = p + d$.

We will discuss the details and the properties of an estimation procedure for
ARUMA models based on the results here in a separate paper.  An implementation
can be found in package `sarima' \citep{Rsarima}.

%----------------------------------------------------------------------------------------
%	  BIBLIOGRAPHY
%----------------------------------------------------------------------------------------

\bibliographystyle{abbrvnat}
\bibliography{sarima}

%----------------------------------------------------------------------------------------
%	  APPENDIX
%----------------------------------------------------------------------------------------

\appendix

\section{Stable polynomials}
\label{S:saruma-stablePoly}

In signal processing, the partial autocorrelations (multiplied by $-1$) are known as
\textit{reflection coefficients} (RCs) and play an important role in determining the zero
locations of a polynomial with complex coefficients. Let $p(z)$ denote such a polynomial of
degree $n$, then
\begin{equation*}
  p(z) = \sum_{i=0}^n p_i z^i.
\end{equation*}
The polynomial is called \textit{stable} if all roots of the polynomial lie outside the unit
circle. The RCs contain the necessary information regarding the locations of roots with
respect to the unit circle and the following theorem holds \citet{Bistritz1996}:
\begin{theorem}
\label{thm:ZeroLocation}
A polynomial $p(z)$ with a well-defined set of RCs $\{\beta_k \}_{k=1}^n$, 
$\abs{\beta_k} \neq 1$, has $\nu$ roots inside the unit circle and $n-\nu$ roots outside 
the unit circle, where $\nu$ can be calculated by counting the number of negative terms 
in the sequence
\begin{equation*}
\nu = n_- \{ q_n, q_{n-1}, \dots q_1 \}
\end{equation*}
whose members are defined by
\begin{equation*}
q_k = \prod_{i=n}^k (1 - \beta_i^2), \qquad k = n, \dots, 1.
\end{equation*}
\end{theorem}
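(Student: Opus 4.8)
The plan is to run the Levinson--Durbin recursion forwards, one degree at a time, and to track how the number of zeros inside the unit circle changes at each step by means of Rouché's theorem. Write $\phi_{k}(z) = 1 - P_{k}(z)$ for the degree-$k$ polynomial generated from $\beta_{1},\dots,\beta_{k}$ (so $\phi_{0}\equiv 1$ and $\phi_{n}$ is $p$ up to the normalisation $p_{0}=1$, after matching the reflection-coefficient sign convention), and let $\psi_{k}(z) = z^{k}\phi_{k}(1/z)$ be its reciprocal polynomial (the conjugate-reciprocal one if the coefficients are complex). Rewritten, Equation~\eqref{eq:PD1n} says $\phi_{k}(z) = \phi_{k-1}(z) - \beta_{k} z\,\psi_{k-1}(z)$, and on $\abs{z}=1$ one has $\abs{\psi_{k-1}(z)} = \abs{\phi_{k-1}(z)}$. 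Let $\nu_{k}$ denote the number of zeros of $\phi_{k}$ strictly inside the unit disc; the target is the recursion
\begin{equation*}
  \nu_{k} =
  \begin{cases}
    \nu_{k-1}, & \abs{\beta_{k}} < 1,\\
    k - \nu_{k-1}, & \abs{\beta_{k}} > 1,
  \end{cases}
  \qquad \nu_{0} = 0 .
\end{equation*}

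First I would show that no $\phi_{k}$ has a zero on $\abs{z}=1$. If $\phi_{k}(z_{0})=0$ with $\abs{z_{0}}=1$, then conjugate symmetry forces $\psi_{k}(z_{0})=0$ as well; the $2\times 2$ transfer matrix carrying $(\phi_{k-1},\psi_{k-1})$ to $(\phi_{k},\psi_{k})$ has determinant $z(1-\beta_{k}^{2})$, which is nonzero at $z=z_{0}$ since $\abs{z_{0}}=1$ and $\abs{\beta_{k}}\neq 1$, so inverting it gives $\phi_{k-1}(z_{0})=\psi_{k-1}(z_{0})=0$, and descending to $\phi_{0}\equiv 1$ is a contradiction. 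Given this, apply Rouché on $\abs{z}=1$ to $\phi_{k}=\phi_{k-1}-\beta_{k}z\,\psi_{k-1}$. When $\abs{\beta_{k}}<1$ the term $\beta_{k}z\,\psi_{k-1}$ is strictly dominated by $\phi_{k-1}$ on the circle, so $\phi_{k}$ and $\phi_{k-1}$ have the same number of zeros in the disc, i.e.\ $\nu_{k}=\nu_{k-1}$. When $\abs{\beta_{k}}>1$ the roles reverse, so $\phi_{k}$ has as many zeros in the disc as $\beta_{k}z\,\psi_{k-1}$, namely $1$ plus the number of zeros of $\psi_{k-1}$ in the disc; since the zeros of $\psi_{k-1}$ are the reciprocals of those of $\phi_{k-1}$ (plus a zero at the origin of the appropriate order if the true degree of $\phi_{k-1}$ dropped below $k-1$) and $\phi_{k-1}$ has no zero on the circle, that count is exactly $(k-1)-\nu_{k-1}$, giving $\nu_{k}=k-\nu_{k-1}$. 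If one prefers to avoid the bookkeeping about zeros at the origin, first perturb the $\beta_{i}$ so that all are nonzero with modulus $\neq 1$, prove the result there, and let the perturbation tend to $0$: no zero can cross the circle in the limit, and the signs of the $q_{k}$ are stable.

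Finally I would unwind the recursion and match it to the sign count. Observe that $q_{k}<0$ exactly when an odd number of indices $i\in\{k,\dots,n\}$ have $\abs{\beta_{i}}>1$, so $n_{-}\{q_{n},\dots,q_{1}\}$ counts precisely the $k$ for which that parity is odd. List the ``unstable'' indices $k_{1}>k_{2}>\cdots>k_{p}$ (those with $\abs{\beta_{i}}>1$); a short induction on $p$ using the recursion above gives $\nu_{n} = k_{1} - k_{2} + k_{3} - \cdots \pm k_{p}$, and the same alternating sum is immediately seen to equal the number of $k$ at which the count of unstable indices $\geq k$ is odd. Hence $\nu = \nu_{n} = n_{-}\{q_{n},\dots,q_{1}\}$, and the split $n = \nu + (n-\nu)$ follows from the ``no zeros on the circle'' step. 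I expect the main obstacle to be the zero-counting for $\psi_{k-1}$ when $\abs{\beta_{k}}>1$: the genuine degree of $\phi_{k}$ rises by one while the formal degree need not, so one has to keep careful track of zeros at $0$ (or neutralise this with the perturbation argument); after that, what remains is the somewhat fiddly combinatorial induction turning the degree-by-degree recursion into the stated sign-count formula.
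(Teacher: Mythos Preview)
The paper does not supply its own proof of this theorem: it is quoted in the appendix as a known result and attributed to \citet{Bistritz1996}. So there is nothing in the paper to compare against, and your proposal should be judged on its own merits.

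Your argument is the standard Rouch\'e/Schur--Cohn approach and is essentially correct. The key ingredients---the step recursion $\phi_{k}=\phi_{k-1}-\beta_{k}z\,\psi_{k-1}$, the identity $\abs{\psi_{k-1}}=\abs{\phi_{k-1}}$ on $\abs{z}=1$, the transfer-matrix determinant $z(1-\beta_{k}^{2})$ used to rule out zeros on the circle, and the Rouch\'e comparison---are all right, and the resulting recursion $\nu_{k}=\nu_{k-1}$ or $\nu_{k}=k-\nu_{k-1}$ is exactly what drives the sign count. Your combinatorial unwinding to $\nu_{n}=k_{1}-k_{2}+\cdots$ and its identification with $n_{-}\{q_{n},\dots,q_{1}\}$ is also correct.

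One remark: the perturbation to avoid $\beta_{j}=0$ is unnecessary. Even when $\phi_{k-1}$ has true degree $d<k-1$, the reciprocal $\psi_{k-1}(z)=z^{k-1}\phi_{k-1}(1/z)$ is still a degree-$(k-1)$ polynomial (leading coefficient~$1$), with a zero of order $k-1-d$ at the origin together with the $d$ reciprocals of the zeros of $\phi_{k-1}$; counting gives $(k-1-d)+(d-\nu_{k-1})=(k-1)-\nu_{k-1}$ zeros inside the disc, so $\nu_{k}=k-\nu_{k-1}$ holds without any limiting argument. With that observation the ``main obstacle'' you anticipated disappears.
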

An immediate consequence of Theorem~\ref{thm:ZeroLocation} is that necessary and 
sufficient conditions for stability (or $\nu = 0$) are
\begin{equation*}
\abs{\parcor{k}} < 1, \qquad k = 1, \dots, n.
\end{equation*}
The result is formulated for RCs but holds also for partial autocorrelations since it
involves only their moduli and squares.

\end{document}